\def\BibTeX{{\rm B\kern-.05em{\sc i\kern-.025em b}\kern-.08em
    T\kern-.1667em\lower.7ex\hbox{E}\kern-.125emX}}
\def\BibTeX{{\rm B\kern-.05em{\sc i\kern-.025em b}\kern-.08em
    T\kern-.1667em\lower.7ex\hbox{E}\kern-.125emX}}
\DeclareMathOperator*{\argmin}{arg\,min}
\DeclareMathOperator*{\argmax}{arg\,max}
\def\blue{\textcolor{blue}}
\newtheorem{Pro}{Proposition}
\begin{document}
\title{ 
Goal-Oriented Remote Tracking Through Correlated Observations in Pull-based Communications
}
\author{\IEEEauthorblockN{Abolfazl Zakeri, Mohammad Moltafet, and 
  \IEEEauthorblockN{Marian Codreanu}
  }
 \thanks{
 A. Zakeri is with 
 CWC-RT,
   University of Oulu, Finland,
 Email: abolfazl.zakeri@oulu.fi. 
 M. Moltafet is with Department of Electrical and Computer Engineering University of California Santa Cruz, Email: mmoltafe@ucsc.edu.
  M. Codreanu is with Department of Science and Technology,
   Link\"{o}ping University, Sweden, 
   Email: marian.codreanu@liu.se.
 }
}
\maketitle
	\begin{abstract}
We address the real-time remote tracking problem in a status update system comprising two sensors, two independent information sources, and a remote monitor. The status updating follows a pull-based communication, where the monitor commands/pulls the sensors for status updates, i.e., the actual state of the sources. \blue{We consider that the observations are \textit{correlated}, meaning that each sensor's sent data could also include the state of the other source due to, e.g., inter-sensor communications or overlapping monitoring regions.} The effectiveness of data communication is measured by a generic distortion, capturing the underlying application's goal. We provide optimal command/pulling policies for the monitor that minimize the average weighted sum distortion and transmission cost.
Since the monitor cannot fully observe the exact state of each source, we propose a partially observable Markov decision process (POMDP) and reformulate it as a belief MDP problem. We then effectively truncate the infinite belief space and transform it into a finite-state MDP problem, which is solved via relative value iteration. Simulation results show the effectiveness of the derived policy over age-based and deep-Q network baseline policies. 
	\end{abstract}
\vspace{-1 em}
\section{Introduction} 
The increasing adoption of the Internet of Things~{(IoT)} and cyber-physical systems, such as smart factory/city/transportation, relies on real-time estimation and tracking of remotely monitored processes for tasks like data processing, actuation, planning, and decision-making. Pragmatic or goal-oriented communication, associated with Level C of communication problems \cite{Deniz_Semantic_JSAC}, is critical for these applications. It necessitates efficient communication system designs tailored to achieve certain end-user goals. One possible approach to measure the effectiveness of such systems is through distortion-based metrics that quantify the difference between the source information and its estimate at the remote monitor~\cite{Deniz_Semantic_JSAC,zakeri2024_phdthesis}.

Several studies in goal-oriented communications, particularly in real-time remote tracking and status updating, explored distortion-based performance measures, e.g.,~\cite{ELIF_Semantic_mag, Niko_mag_sem21, Niko_Remote_Recons, Aimin_goal_WC24, Zakeri_wcnc24, NiKo_Goal2, Peter_GoalOriented, zakeri2023semantic}. For instance,~\cite{ELIF_Semantic_mag, Niko_mag_sem21} highlighted the role of sampling and transmission policies in reducing reconstruction errors, and~\cite{Aimin_goal_WC24}  recently exploited similar ideas to introduce a unified performance metric for goal-oriented communications.
Furthermore, in \cite{Niko_Remote_Recons}, the performance of sampling and transmission policies for tracking two Markov sources was analyzed, and \cite{NiKo_Goal2} derived transmission policies under average resource constraints using constrained Markov decision processes (MDP) and drift-plus-penalty methods. Work~\cite{Zakeri_wcnc24} demonstrated that estimation strategies have a significant impact on performance. It revealed the limitations of approaches that assume a \textit{fixed} estimation strategy based on the last received sample, e.g.,~\cite{NiKo_Goal2}. Such fixed estimation strategies can degrade performance and lead to inaccurate assessments of sampling and transmission policies.

Despite notable progress in real-time remote tracking, the impact of \textit{correlation} in source observations remains underexplored. Correlated observations naturally arise in monitoring systems due to, e.g., spatial dependencies between sensors or inter-sensor communications. Although this concept, also referred to as correlated sources \cite{corl_source_Eytan}, has been studied in the context of the age of information (AoI), e.g., \cite{corl_source_Eytan,mattuk_corr}, its impact on goal-oriented communications--where the value and utility of source information are critical--was not thoroughly investigated. \blue{Furthermore, AoI-oriented works focus on age-related metrics that are agnostic to the actual value of the source information and estimation strategies, rendering them unsuitable for direct application in goal-oriented communications.} To address this gap, this letter focuses on real-time remote tracking considering \textit{correlation} in observations. 

We consider a slotted system comprising two independent Markov sources and two sensors transmitting their observations to a remote monitor over error-prone communication channels, as illustrated in Fig.~\ref{Fig_SM_corr}. 
The correlation comes from the fact that an update from a sensor may also include the (exact) state of the other source \cite{Petar_obsr_ACM,corl_source_Eytan}. 
The system operates under a \textit{pull-based} communication model, e.g., \cite{agheli2024integrated, Ulukus_aoii_pull24}, where the monitor pulls/requests the state of the sources from the sensors. 
The proposed model applies to vehicular systems with inter-vehicle communications and industrial settings involving overlapping sensor observations.
The objective is to provide an optimal command strategy that minimizes the average weighted sum of distortion and transmission cost.  
Since the monitor does not have real-time knowledge about
the sources, we propose a partially observable MDP (POMDP) to account for the uncertainty. 
We then formulate the belief-MDP problem, expressing the belief as a function of the AoI at the monitor. Noting that the belief values remain almost constant as the AoI increases, we upper bound the AoI and cast the problem as a finite-state MDP, which is solved using the relative value iteration algorithm (RVIA). 
The proposed optimal policy is evaluated against two widely used age-based benchmarks as well as a Deep Q-Network (DQN) policy. It consistently outperforms the age-based policies by a significant margin and also achieves better performance than the DQN policy across all scenarios. 
\\\indent 
The closest related work is~\cite{mattuk_corr}, which partly focuses on real-time error analysis for correlated observations in a queuing system. However, this letter differs fundamentally from \cite{mattuk_corr} in two key aspects:
1) The system model in \cite{mattuk_corr} is based on a queuing model, whereas this letter is a pull-based status-updating model, and
2) \cite{mattuk_corr} focuses on the \textit{analysis} of real-time error. In contrast, this letter focuses on the \textit{design} of pulling policies for a generic distortion metric, including real-time error,~among others.
\unskip
\vspace{-2 em}
\section{System Model and Problem Formulation }\label{sec_SM} \allowdisplaybreaks
\begin{figure}[t]
    \centering
    \hspace{-5 mm}
    \includegraphics[width= 0.44\textwidth]{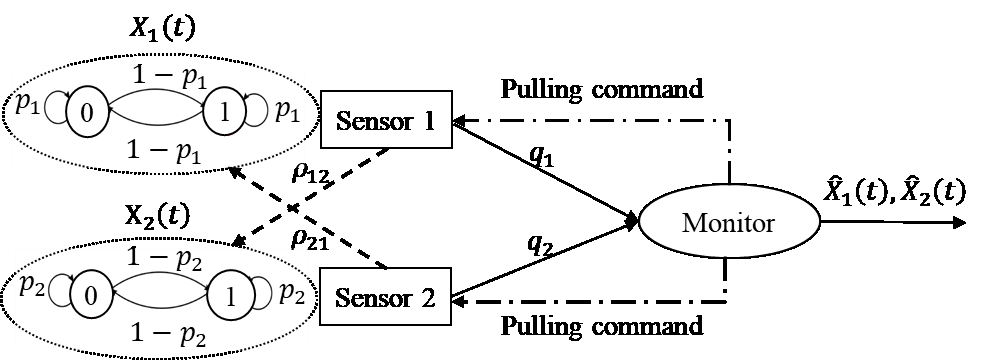}    
    \caption{System model.
    }
    \label{Fig_SM_corr}
    \vspace{-1.7 em}
\end{figure}
We consider a real-time tracking system consisting of two independent sources with corresponding sensors and a remote monitor, as shown in Fig.~\ref{Fig_SM_corr}. Time is divided into discrete time slots, i.e., ${ t\in\{0,1,\ldots\} }$.  
The sources are indexed by $i \in \{1,2\}$. Each source~$i$ is modeled as a binary symmetric discrete-time Markov chain~${X_i(t)=\{0,1\}}$ with the self-transition probability $p_i$. We choose a binary symmetric Markov model for the sake of presentation's clarity. Our proposed solution, however, can be easily extended to a generic finite-state Markov with an arbitrary number of states and non-symmetric transition probabilities.
\\\indent
Each source is observed by its dedicated sensor and potentially by the other sensor(s). This could happen due to direct communication between sensors (e.g., in vehicular communications) or overlapping sensor views (e.g., in smart factories where robots working in shared areas may detect each other's positions or other related information). At each time slot, each sensor~$i$ collects information about the state of source~$i$. Additionally, with probability $\rho_{ij}$, the measurement collected by sensor~$i$ also includes information about the current state of source~$j$,~$X_j(t)$.  
The overall correlation structure is captured by a matrix $\boldsymbol{\mathrm{P}}\triangleq [\rho_{ij}]$, where $\rho_{ij}\in[0,1]$ represents the probability that sensor~$i$ observes source~$j$, with $\rho_{ii} = 1$ for all~$i$. This type of correlation model is also used in~\cite{corl_source_Eytan, mattuk_corr}.  
\\\indent
The status updating follows a pull-based protocol. \blue{At each time slot~$t$, the monitor, which is the controller, can pull (command) a sensor.} The selected sensor then sends the status data to the monitor. Let $a(t)\in\{0,1,2\}$ denote the command action of the monitor at slot $t$, where $a(t)=0$ means the monitor is idle, and ${a(t)=i,~i=1,2,}$ means the monitor commands source~$i$.  
\\\indent
We assume an error-prone communication channel between each sensor and the monitor. Each transmission from sensor~$i$ to the monitor occupies one time slot and is successfully received with probability~${q_i }$, referred to as the reception success probability.  
\\\indent
The monitor generates a real-time estimate of each source~$i$. Let us denote the estimate of source~$i$ at slot $t$ by $\hat{X}_i(t)$ for all~$i$. Noting that the construction of the estimate $\hat{X}_i(t)$ depends on the design performance metric, next, we first define our performance metric and then our estimation strategy.
\\\indent
As a goal-oriented metric, we consider a generic distortion measure defined~as:\footnote{\blue{Note that this distortion is \textit{asymmetric and state-dependent}, capturing the \textit{significance} of any mismatch between the actual sources' state and their estimates,  based on the fact that in some applications, \textit{not
all errors have the same impact}. For instance, in smart grid outage detection, where power line status is Normal (1) or Faulty/Outage (0), missing a fault (0→1) risks blackouts or equipment damage, while falsely detecting a normal state (1→0) triggers unnecessary inspections. It is worth noting that the Age of Incorrect Information (AoII)~\cite{Tony_1} is another potentially relevant metric. However, AoII primarily focuses on the monotonically increasing \textit{time penalty} associated with system mismatches, rather than the asymmetric error penalty, which may be less critical for such scenarios.  
}}
\begin{equation}\label{eq_distor_fun}
    d_i(t) \triangleq f_i\left( X_i(t), \hat{X}_i(t) \right),
\end{equation}  
where the function ${ f_i: \mathcal{X}_i \times \mathcal{X}_i \rightarrow \mathbb{R} }$ could be any bounded function, i.e., ${|f(.)|<\infty}$ \cite{zakeri2023semantic}, where $\mathcal{X}_i\in\{0,1\}$ is the state space of each source~$i$. 
 The function $f_i(\cdot)$ can represent different estimation error metrics
 such as the real-time error, ${ d_i(t) = \mathds{1}_{\{X_i(t) \neq \hat{X}_i(t)\}} }$, where the binary indicator function $\mathds{1}_{\{.\}}$ equals one if the input argument holds true. 
\\ \indent
As shown in \cite{Zakeri_wcnc24}, the estimation strategy, i.e., the construction of $\hat{X}_i(t)$, has a significant impact on the scheduling policy and performance. We propose a \textit{minimum mean distortion} estimation, where distortion could be any function as defined in~\eqref{eq_distor_fun}. \blue{To compute the estimate at slot $t$, we require the probability distribution over the state space of the sources based on potentially all history of received samples and their time stamps until. 
Since the sources are Markovian, the most recent samples available at the monitor, denoted by $\bar{X}_i(t)$ for each source $i$, along with their respective ages, $\delta_i(t)$, provide sufficient information to obtain the estimate.} We assume that in case of a successful status update from a sensor during slot~$t-1$, the most recent sample at slot~$t$ will be updated to the source state at $t-1$, before the source state transition at~$t$.    
{\color{blue}The estimate can be formally expressed as 
\begin{align} 
    \hat{X}_i(t) &= \argmin_{Y_i(t)\in\mathcal{X}_i }~
    \mathbb{E}\!\left\{f_i\!\left(X_i(t),\,Y_i(t)\right)\right\}, \\
    &= \argmin_{Y_i(t)\in\mathcal{X}_i }~
    \sum_{x_i\in\mathcal{X}_i} 
    \Big(\Pr\{X_i(t) = x_i \mid \bar{X}_i(t),\delta_i(t)\} \notag \\
    &\hspace{4.5em}\times f(X_i(t)=x_i, Y_i(t))\Big).
\end{align}
}
At each time slot, the objective is to determine the optimal command action $a(t)$ at the monitor that minimizes a goal-oriented cost function. 
We define a cost function composed of two parts: a generic distortion term that reflects application-level performance objectives, and a transmission cost incurred by the radio resource usage for each status update. 
Our goal is to derive an optimal control policy for the monitor that minimizes the long-term average of a weighted sum of the distortion and transmission costs. This is formulated as a stochastic control problem below:
       \begin{align} \label{Org_P1}
          {\mbox{minimize}}~   &
           \limsup_{T\rightarrow \infty}\frac{1}{T}   \!\sum_{t=1}^T  \Bbb{E}\!\left\{ \sum_i w_id_i(t) \!+\! \alpha \mathds{1}_{\{a(t) \neq 0\} \!} \right\},
                   \end{align}
where the variables are ${ \{a(t)\}_{t=1,2,\ldots} }$. Here, $w_i$ denotes the weight assigned to source $i$ representing its (relative) importance, $\alpha$ is a coefficient that emphasizes the transmission cost, and $\mathbb{E}\{\cdot\}$ is the expectation operator taken over the randomness of the system (due to the sources, correlation, and wireless channel) and the (possibly randomized) action selection of $a(t)$.  
\vspace{-1 em}
\section{A Solution to Problem \eqref{Org_P1}}
This section proposes a solution to  problem~\eqref{Org_P1}. 
The sources $X_i(t)$ are not observable to the controller, and thus, we first model problem~\eqref{Org_P1} as
a POMDP. Then, we cast the POMDP into a finite-state MDP problem and solve it using RVIA.
\\\indent
The POMDP is described by the following elements:   
\\
$\bullet$ \textit{State:}
Let ${\delta_i(t)\in\{1,2,\dots\}}$ denote the age of the most recent sample $\Bar{X}_i(t)$ of source~$i$ \textit{available} at the monitor at slot~$t$.  
The state at slot $t$ is ${ s(t) = \left\{X_i(t), \bar{X}_i(t), \delta_i(t) \right\}_{i=1,2 } }$. 
The most recent samples $\bar{X}_i(t)$ and their  ages~$\delta_i(t)$ are included in the state because they are required to construct the estimate~$\hat{X}_i(t)$, hence the cost function. 
\\
$\bullet$ \textit{Observation:} The observation (at the monitor) at slot $t$ is ${O(t) = \left\{ \bar{X}_i(t), \delta_i(t) \right\}_{i=1,2 }}$. 
\\ 
$\bullet$ \textit{Action:} The action at slot $t$ is $a(t)\in\mathcal{A}$, where $\mathcal{A}=\{0,1,2\}$ is the action space as defined in~Section~\ref{sec_SM}.
\\
$\bullet$ \textit{State transition probabilities:} The transition probabilities from  current state ${s= \left\{ X_i, \bar{X}_i, \delta_i \right\}_{i=1,2}}$ to next state
${s'= \left\{ X'_i, \bar{X}_i', \delta_i' \right\}_{i=1,2}}$
under a given action $a$ is denoted by  
$
{
        \Pr\{s'\,|\,s, a\} }
  $  
and it can be expressed as
\begin{equation}
    \begin{aligned}
        \Pr\{s'\,|\,s, a\}  &  \overset{(a)}{=}
 \Pr\left\{\bar X_1',\,\bar X_2',\,\delta_1',\,\delta_2'\,\big|\,s,\,a\right\}  
\Pr\{X_1',\,X_2'\,\big|\,s,\,a\}
\\& \overset{(b)}{=}
\Pr\left\{\bar X_1',\,\bar X_2',\,\delta_1',\,\delta_2'\,|\,s,\,a\right\} 
 \prod_i 
\Pr\{X_i'\,|\,X_i\}, 
    \end{aligned}
\end{equation}
where (a) follows from the fact that, given the current state $s$, the next most recent sample at the monitor, $\bar X_i'$, is conditionally independent of the next state of the source, $X_i'$, and (b) follows from the mutual independence between the sources.
Moreover, $\Pr\left\{\bar X_1',\,\bar X_2',\,\delta_1',\,\delta_2'\,|\,s,\,a\right\} $
is given by \eqref{eq_state_trans_xbardelta} at the top of the next page, and is computed by considering all possible outcomes for each action, incorporating the probabilistic events induced by channel reliabilities and source correlation probabilities. 
 To explain, line 1 of \eqref{eq_state_trans_xbardelta} corresponds to the idle action, where no status update is transmitted. Lines 2–3 represent cases where either sensor 1 (line 2) or sensor 2 (line 3) attempts to transmit an update, but the transmission fails. Lines 4 and 6 correspond to successful updates that include the state of both sources. Lines 5 and 7 represent successful updates that only include the state of the transmitting sensor, without information from the other source. Finally, the last line of \eqref{eq_state_trans_xbardelta} accounts for all remaining transitions in the state space that are not explicitly covered in the previous lines. 
Finally, we have 
\begin{equation}\label{eq_xdyn}
     \begin{array}{cc}
        \Pr\{X_i'\,|\,X_i\} =\left\{ 
  \begin{array}{ll}
  p_i,  & \text{if} ~ X_i' = X_i
  \\
  1-p_i,   & \text{if} ~ X_i' \neq  X_i
    \end{array}
    \right.
     \end{array}.
 \end{equation}
 \begin{figure*}[t]
 \begin{equation}\label{eq_state_trans_xbardelta}
     \begin{array}{ll}
&        \Pr\left\{\bar X_1',\,\bar X_2',\,\delta_1',\,\delta_2'\,|\,s,\,a\right\}    
        =
        \left\{ 
  \begin{array}{ll}
  1,  & \text{if} ~ a = 0,~\bar X_1' = \bar X_1,~\bar X_2' = \bar X_2,~\delta_1' = \delta_1+1,~\delta_2' = \delta_2+1,
  \\
  1-q_1,   & \text{if} ~ a = 1,~\bar X_1' = \bar X_1,~\bar X_2' = \bar X_2,~\delta_1' = \delta_1+1,~\delta_2' = \delta_2+1,
    \\
  1-q_2,   & \text{if} ~ a = 2,~\bar X_1' = \bar X_1,~\bar X_2' = \bar X_2,~\delta_1' = \delta_1+1,~\delta_2' = \delta_2+1,
  \\
  q_1 \rho_{12},   & \text{if} ~ a = 1,~\bar X_1' = X_1,~\bar X_2' = X_2,~\delta_1' = 1,~\delta_2' = 1,
  \\
  q_1 (1-\rho_{12}),   & \text{if} ~ a = 1,~\bar X_1' = X_1,~\bar X_2' = \bar X_2,~\delta_1' = 1,~\delta_2' = \delta_2+1,
    \\
  q_2 \rho_{21},   & \text{if} ~ a = 2,~\bar X_1' = X_1,~\bar X_2' = X_2,~\delta_1' = 1,~\delta_2' = 1,
  \\
  q_2 (1-\rho_{21}),   & \text{if} ~ a = 2,~\bar X_1' = \bar X_1,~\bar X_2' = X_2,~\delta_1' = \delta_1+1,~\delta_2' = 1,
     \\
     0,  & \text{otherwise}.
    \end{array}
    \right.
     \end{array}
 \end{equation}
 \hrule
 \vspace{-1 em}
 \end{figure*}
$\bullet$
\textit{Observation function:}
 The observation function is
 the probability distribution function of $o(t)$ given state $s(t)$ and action $a(t-1)$, i.e., ${ \Pr\{o(t)\,\big|\, s(t),a(t-1)\}}$. 
 Since the observation is always part of the state,
 the observation function is deterministic, i.e., ${ \Pr\{o(t)\,\big|\,s(t),a(t-1)\}  =  \mathds{1}_{\{o(t)=\left\{ \bar{X}_i(t), \delta_i(t) \right\}_{i=1,2 }\}} }$. 
 \\
 $\bullet$ \textit{Cost function:} The immediate cost function at slot $t$ is 
 \begin{equation}\textstyle
      C(s(t),a(t)) = \sum_i w_id_i(t) + \alpha \mathds{1}_{\{a(t) \neq 0\}}.
 \end{equation} Now, with the POMDP specified above, we follow the belief MDP approach \cite[Ch. 7]{POMDP_AI} to achieve an optimal decision-making for the POMDP. Accordingly, in the sequel,   we transform the POMDP into a belief MDP. 
\\ \indent
Let  $I(t)$ denote the complete information state at slot $t$ consisting of \cite[Ch. 7]{POMDP_AI}:
 i)     the initial probability distribution over the state space,
ii)  all past and current observations,  
$o(0),\dots, o(t)$,
and iii) all past actions, $a(0), \dots, a(t-1)$.
\\
The belief is a conditional probability distribution over the state space of the sources.
Having binary independent sources, we define the belief at slot $t$ for each source $i$ as 
\begin{align}\label{eq_blf}
    b_i(t)\triangleq \Pr\{X_i(t) = 1\,\big|\,I(t)\},~\forall\,i\in\{1,2\}. 
\end{align}
We shall derive the belief update at $t+1$ from the current belief after execution of action $a(t)$ and receiving observation $o(t + 1)$. 
The proposition below gives the belief update function. 
\begin{Pro}
\label{prop_blf_evl}
For each source $i$, given the belief $b_i(t)$ and observation $\bar{X}_i(t+1), \delta_i(t+1)$ the belief update is:
      \begin{equation}\label{eq_blf_evl}
  \hspace{-1.5 em}
     \begin{array}{ll}
 & b_i(t+1) =
 \\&
 \left\{ 
  \begin{array}{ll}
  p_i  & \text{if}~ \delta_i(t+1) = 1,~\bar{X}_i(t+1)=1 \\
   1-p_i  & \text{if}~ \delta_i(t+1) = 1,~\bar{X}_i(t+1)=0 , \\
    b_i(t)p_i + (1-b_i(t))\bar{p_i}, & \text{if}~ \delta_i(t+1) \neq 1.
\end{array}\right. 
    \end{array}
 \end{equation}
 where $\bar{p_i} \triangleq 1-p_i$.
\end{Pro}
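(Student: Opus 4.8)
The plan is to prove Proposition~\ref{prop_blf_evl} by conditioning on the realized value of the age $\delta_i(t+1)$ and then invoking the source Markov dynamics in~\eqref{eq_xdyn} together with the independence structure of the model. Throughout, recall from~\eqref{eq_blf} that $b_i(t+1)=\Pr\{X_i(t+1)=1\mid I(t+1)\}$, and that $I(t+1)$ augments $I(t)$ with the new action $a(t)$ and the new observation $o(t+1)=\{\bar X_i(t+1),\delta_i(t+1)\}_{i=1,2}$. By the model's timing convention, a successful update delivered during slot $t$ makes the most recent sample at slot $t+1$ equal to the source state at slot $t$ (before the transition at $t+1$).

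First, I would handle the case $\delta_i(t+1)=1$. Inspecting the transition law~\eqref{eq_state_trans_xbardelta}, the event $\delta_i(t+1)=1$ arises only on the lines corresponding to a successful reception of an update that includes sensor~$i$'s observation, and on each such line $\bar X_i(t+1)=X_i(t)$. Hence, conditioning on $I(t+1)$ pins down $X_i(t)=\bar X_i(t+1)$ exactly, and the only remaining randomness in $X_i(t+1)$ is the one-step source transition. Applying~\eqref{eq_xdyn} gives $b_i(t+1)=\Pr\{X_i(t+1)=1\mid X_i(t)=\bar X_i(t+1)\}$, which equals $p_i$ when $\bar X_i(t+1)=1$ and $1-p_i$ when $\bar X_i(t+1)=0$; this is exactly the first two branches.

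Next, for $\delta_i(t+1)\neq1$, I would expand via the law of total probability over $X_i(t)\in\{0,1\}$ and use the Markov property~\eqref{eq_xdyn} to write
\[
b_i(t+1)=p_i\,\Pr\{X_i(t)=1\mid I(t+1)\}+\bar{p_i}\,\Pr\{X_i(t)=0\mid I(t+1)\}.
\]
It then remains to show $\Pr\{X_i(t)=1\mid I(t+1)\}=\Pr\{X_i(t)=1\mid I(t)\}=b_i(t)$, i.e., that the information gained in passing from $I(t)$ to $I(t+1)$ is uninformative about $X_i(t)$. On the event $\{\delta_i(t+1)\neq1\}$ no fresh sample of source~$i$ is delivered, so the only new data are: the action $a(t)$, which is measurable with respect to $I(t)$; the failure/idle outcome on the source-$i$ link; the stale observation $\bar X_i(t+1)=\bar X_i(t),\ \delta_i(t+1)=\delta_i(t)+1$; and, possibly, outcomes pertaining to source~$j$. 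Channel reception and correlation-inclusion events are independent of the source states, and the two sources are mutually independent (the same facts used to derive~\eqref{eq_state_trans_xbardelta}), so none of these updates carries information about $X_i(t)$. Substituting $\Pr\{X_i(t)=1\mid I(t+1)\}=b_i(t)$ yields $b_i(t+1)=p_i b_i(t)+\bar{p_i}(1-b_i(t))$, the third branch.

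The step I expect to be the main obstacle is this last conditional-independence argument in the case $\delta_i(t+1)\neq 1$: making precise which latent events (channel success, correlation inclusion, source-$j$ dynamics) and which components of $o(t+1)$ actually enter $I(t+1)$ on that event, and checking that each is conditionally independent of $X_i(t)$ given $I(t)$ under the stated assumptions. Once that is established, the remaining manipulations are the elementary identities above.
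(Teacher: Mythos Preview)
Your proposal is correct and follows essentially the same approach as the paper's proof: both decompose $b_i(t+1)$ via the law of total probability over $X_i(t)\in\{0,1\}$, apply the one-step transition~\eqref{eq_xdyn}, and then evaluate $\Pr\{X_i(t)=X_i\mid I(t+1)\}$ case-by-case depending on whether $\delta_i(t+1)=1$ or not. The paper arrives at this decomposition by first writing the belief as a Bayes ratio and then cancelling the common factor $\Pr\{o\mid I,a\}$, while you invoke total probability directly; the resulting sum and case analysis are identical. If anything, your treatment of the conditional-independence step in the $\delta_i(t+1)\neq1$ case is more explicit than the paper's, which simply asserts the corresponding values in its case table.
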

\begin{proof} It follows directly from the belief definition in~\eqref{eq_blf} for $t+1$, and its details derivations are in~Appendix~\ref{App_blfUpdate}. 
 To explain when $\delta_i(t+1) = 1$, the monitor has exact knowledge of the source’s previous state, i.e., slot $t$, and the remaining uncertainty is limited to a single transition in the Markov chain. When there is no update, i.e., $\delta_i(t+1) \neq 1$, the belief evolves over the current belief based on the transition possibilities in the Markov chain. 
\end{proof}
The belief state is then defined by
\begin{equation}\label{eq_blf_state}
    z(t) \triangleq \left\{b_i(t),\bar{X}_i(t),\delta_i(t) \right\}_{i=1,2}
\end{equation}
\blue{Having the belief state defined, we can formulate the belief MDP problem.} However,  solving the resulting (belief) MDP problem is challenging because the belief is a continuous state variable, and thus, the state space of the problem is infinite. Nonetheless, from the evolution of belief in \eqref{eq_blf_evl} we can observe that the last line in the equation is the propagation of uncertainty in the associated Markov chain, where the value of $\delta_i(t)$ indicates how many transitions happened from the last time the source state was $\bar{X}_i(t)$. This argument suggests that we can equivalently rewrite the belief in~\eqref{eq_blf_evl}, based on the $N$-step transition probabilities formula \cite{Kam_Towards_eff_2018}, as follows:
\begin{equation}\label{eq_blf_Nstep}
    \hspace{-1 em}
     \begin{array}{ll}
 & b_i(t+1) =
 \left\{ 
  \begin{array}{ll}
  0.5\left( 1+(2p_i-1)^{\delta_i(t+1)}\right)   & \text{if}~\bar{X}_i(t+1)=1 \\
0.5\left( 1-(2p_i-1)^{\delta_i(t+1)}\right)   & \text{if}~\bar{X}_i(t+1)=0.
\end{array}\right. 
    \end{array}
 \end{equation}
From \eqref{eq_blf_Nstep}, we observe that for sufficiently large values of $\delta_i(t)$, i.e., $\delta_i(t) \ge N$, the belief approaches $0.5$ as $\delta_i(t)$ increases.
This implies that we can re-define the state~\eqref{eq_blf_state}~as 
\begin{align}\label{eq_truncblf_state}
   \underline{z}(t) &=
  \\& \nonumber
  \left\{\bar{X}_i(t),\delta_i(t)\right\}_{i=1,2},\,\bar{X}_i(t)\in\{0,1\},\delta_i(t)\in\{1,\dots,N\},
\end{align}
 and formulate a finite-state MDP problem.
\\\indent 
Note that the action of the above-mentioned finite-state MDP problem remains the same as that of the POMDP. Moreover, the state transition probabilities are given by
\begin{equation}
    \begin{aligned}
        \Pr\{ \underline{z}'\,|\,\underline{z}, a \} 
        &= \sum_{X_1,X_2\in\{0,1\}}
        \Pr\{\underline{z}'\,|\, \underline{z},a,X_1,X_2\} \\
        &\quad \times \Pr\{X_1,X_2\,|\, \underline{z},a\}.
    \end{aligned}
\end{equation}
where $\Pr\{\underline{z}'\,|\, \underline{z},a,X_1,X_2\}$ can be obtained by \eqref{eq_xdyn}, and due to the independency between the dynamic of sources, $ \Pr\{X_1, X_2\,|\, \underline{z} ,a\}$ can be written as $\textstyle {\Pr\{X_1,X_2\,|\, \underline{z},a\}=\prod_{i\in\{1,2\}} \Pr\{X_i\,|\, \underline{z},a\}  }$, where
\begin{equation}
   \hspace{-1 cm}  \begin{array}{ll}
& \Pr\{X_i\,|\,\underline{z},a\}  = 
\\&
 \left\{ 
  \begin{array}{ll}
  0.5\left(1+(2p_i-1)^{\delta_i(t)} \right), & \text{if}~~ X_i = 1,~\bar{X}_i=1,
  \\
   0.5\left(1-(2p_i-1)^{\delta_i(t)} \right), & \text{if}~~ X_i = 0,~\bar{X}_i=1,
  \\
  0.5\left(1+(2p_i-1)^{\delta_i(t)} \right), & \text{if}~~ X_i = 0,~\bar{X}_i=0,
  \\
 0.5\left(1-(2p_i-1)^{\delta_i(t)} \right), & \text{if}~~ X_i = 1,~\bar{X}_i=0.
\end{array}\right. 
    \end{array}
 \end{equation}
Finally, the immediate cost function is 
\begin{align}
    C(\underline z(t)) &= \textstyle\sum_i w_i 
    \Big[ b_i(t) f(1,\hat{X}_i(t))  
    + (1 - b_i(t)) f(0,\hat{X}_i(t)) \Big]  \nonumber \\
    &\quad + \alpha \mathds{1}_{\{a(t) \neq 0\}}.
\end{align}
where $b_i(t)$ is given by \eqref{eq_blf_evl}, and the estimate $\hat{X}_i(t)$ is obtained according to minimum mean distortion estimation, using the ages $\delta_i(t)$ and the last sample $\bar{X}_i(t)$ given in the state $\underline{z}(t)$.
\\\indent
Due to the source dynamics and the chance of loss in the channels, it is not difficult to see that the above-formulated MDP is unichain. Furthermore, the MDP is also aperiodic. Shortly, these follow because the state $(0,0,N,N)$ can be accessible from any state, including itself, under any stationary deterministic policy.  By unichain and aperiodicity properties of the MDP, we apply RVIA to obtain an optimal policy, which is guaranteed to converge~\cite{Zakeri_Journal_Relay}. 

{\color{blue}
The RVI algorithm turns the Bellman optimality equation into the following iterative procedure that for all state $\underline{z}$ in the state space $\underline{\mathcal{Z}}$ and for iteration index ${ n=1,2,\dots }$, we have
\begin{equation}\label{Eq_RVIal}
\begin{array}{ll}
&
\displaystyle
V^{n+1}(\underline{z}) =
\min_{a \in \mathcal{A}}
\left\{C(\underline{z})+ \underset{\underline{z}^{\prime} \in \underline{\mathcal{Z}}}{\sum} 
\operatorname{Pr}\left\{\underline{z}^{\prime} \mid \underline{z}, a\right\} 
h^n(\underline z')\right\}
\\
&
\displaystyle
h^n(\underline z) = V^n(\underline{z})
- V^n(\underline{z}_{\mathrm{ref}}),
\end{array}
\end{equation}
where $\underline{z}_{\mathrm{ref}} \in \underline{\mathcal{Z}}$ is an arbitrarily chosen reference state, and we initialize $ { V(\underline{z}) = 0},~\forall\, \underline{z}\in \underline{\mathcal{Z}}, $ and repeat the process until it converges. 
Once the iterative process of the RVI algorithm converges, the algorithm returns an optimal policy 
$
\{a^{*}(\underline{z})\}$ 
and the optimal value, which equals $V({\underline{z}_{\mathrm{ref}}})$. A full description of the RVI algorithm is given in Algorithm~\ref{alg_RVI}, where $\epsilon$ is a small positive number.

\begin{algorithm}
    \SetKwInOut{Inputi}{Initialize}
    \SetKwInOut{run}{RUN}
     \SetKwInOut{output}{Output}
     \SetKwInOut{Output}{Output}
     \SetKwComment{Comment}{/*}{ }
     \SetKwRepeat{Do}{do}{while}
    \Inputi{ $\underline{z}_{\mathrm{ref}}$, $\epsilon$, $n=0$, set $h^0(\underline{z}) =0$ for all $\underline{z}\in\underline{\mathcal{Z}}$}
    \Do{$\displaystyle\max_{\underline{z}\in\mathcal{\underline{Z}}}|h^{n}(\underline{z})-h^{n-1}(\underline{z})|\geq\epsilon$}{
     $n = n+1$\\
    \For{$
    \displaystyle
    \underline{z}\in\mathcal{\underline{Z}}$}{
    $\displaystyle V^{n}(\underline{z}) = \min_{a\in \mathcal{A}}\left[C(\underline{z}) +\sum_{\underline{z}'\in \mathcal{\underline{Z}}}{\Pr}(\underline{z}' \mid \underline{z},a)h^{n-1}(\underline{z}')\right]$\\
    $h^{n}(\underline{z}) = V^{n}(\underline{z})-V^{n}(\underline{z}_{\mathrm{ref}})$\\
    } }
    \Comment{Generate a (deterministic)  policy}
    
    $\displaystyle \pi(\underline{z})=\argmin_{a\in \mathcal{A}}\left[C(\underline{z}) +\sum_{\underline{z}'\in \mathcal{\underline{Z}}}{\Pr}(\underline{z}' \mid \underline{z},a)h(\underline{z}')\right],~{\forall \underline{z}\in \mathcal{\underline{Z}}}$
 
    \KwOut{An optimal policy $ \pi^*=\pi$, the optimal value $C^* = V(\underline{z}_{\mathrm{ref}})$} 
    \caption{\blue{The RVIA algorithm} }
    \label{alg_RVI}
\end{algorithm}
}

\textit{Complexity Analysis:} The computational complexity of the derived command policy involves two phases: (i) an offline phase, in which RVIA is executed to obtain the policy in the form of a lookup table, and (ii) an online phase, where the obtained policy is applied in real time. The offline complexity is dominated by the computational cost of the RVIA, which is on the order of $|\mathcal{A}||\mathcal{\underline{Z}}|^2$ \cite{Zakeri_Journal_Relay}, where $|\mathcal{A}| = 3$ denotes the size of the action space, and $|\mathcal{\underline{Z}}|=N^2 {M}^{2}$ is the size of the truncated state space, with $M$ being the number of the states of each source Markov chain.
 The complexity of the online phase is $\mathcal{O}(1)$, as it involves only a simple lookup operation in the precomputed table.

\section{Numerical Results}\label{sec_numres}
This section presents simulation results to demonstrate the effectiveness of the derived policy and the impact of various parameters on the system's performance.\footnote{The implementation code is available at https://github.com/AZakeri94/goal-oriented-correlated-observation.}
The RVIA stopping criterion is set to $10^{-3}$, the weights of both sources, $w_1$ and $w_2$, are set to $1$, and the distortion metric defaults to real-time error unless stated otherwise. We set $N=30.$\footnote{To ensure reliable results, $N$ should be chosen sufficiently large such that in general
$P_i^N$
approximates the steady-state distribution of the Markov chain associated with source $i$, where $P_i$ denotes its transition probability matrix.}
Other parameters are given in the caption of each figure.
\\\indent \blue{For benchmarking, we consider the following policies:  
1) \textit{Max-age-first}: This policy commands the sensor whose direct observing source has the highest (weighted) age at the monitor, that is, 
${i^*\triangleq \underset{i\in\{1,2\}}{\argmax}~w_i\delta_i}$},
2) \textit{Age-optimal}: This policy is derived by replacing the distortion, $d_i(t)$, with the AoI at the monitor, $\delta_i(t)$, in the objective function \eqref{Org_P1} and solving the resulting problem,
and 3) \textit{Deep Q-Network (DQN)}: We adopt a DQN policy for the belief MDP problem with state in~\eqref{eq_truncblf_state}. We consider a fully connected linear network with one hidden layer of size $256$, ReLU activation, and the Adam optimizer. The learning rate is set to $0.001$, the discount factor to $0.99999$, the batchsize is $64$,  and the model is trained for $200$ epochs with $300$ iterations per epoch. We then save the trained model, run it for the same duration, and record the average cost function over that period.
\\\indent 
We first analyze the effect of the coefficient factor $\alpha$ in~\eqref{Org_P1} (i.e., the transmission cost) on the average weighted total cost, as shown in Fig.~\ref{fig_alpha}. The results indicate that the proposed policy significantly outperforms the age-based baseline policies, particularly as the transmission cost increases. This improvement occurs because the max-age-first policy ignores transmission costs and the age-optimal policy is source-agnostic and may initiate transmissions even when the source state matches its estimate at the monitor. Additionally, Fig.~\ref{fig_alpha} shows that the performance of the distortion-based policies (i.e., RVIA and DQN) remains nearly constant beyond a certain point. This is because when the transmission cost is high, the optimal action for the monitor is to remain idle, leading to maximum distortion irrespective of~$\alpha$. 
\begin{figure}[t!]
    \centering
    \includegraphics[width= 0.4\textwidth]{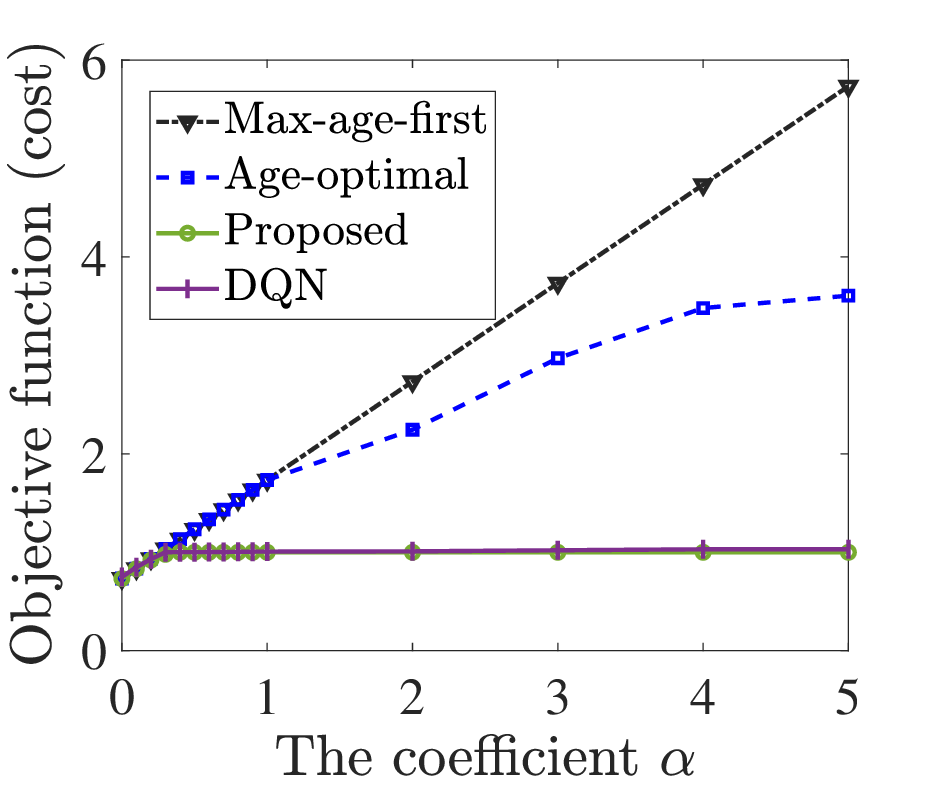}  
    \vspace{-2 mm}
    \caption{The average cost (the objective function in \eqref{Org_P1}) vs. the coefficient $\alpha$ (transmission cost) for $p_1=0.7$, $p_2=0.7,$ $q_1=0.8$,  $q_2=0.6,$ $\rho_{12}=0.4$, and $\rho_{21}=0.7$.
    }
    \label{fig_alpha}
    \vspace{-6 mm}
\end{figure}

Fig.~\ref{Fig_correlation} shows the average cost as a function of the observation correlations, assumed equal for both sources ($\rho_{12} = \rho_{21}$), for two different distortion functions:  1) the real-time error in Fig.~\ref{fig_crl_rte}, and 2) a distortion given by \eqref{eq_dis_ex} in Fig.~\ref{fig_crl_gnrdis}.  
The results indicate that as correlation probabilities increase, the average cost decreases for both distortion functions, particularly for the real-time error one. This is because a higher correlation provides a higher chance to update the monitor about the other source's state as well, which in turn corrects the sources' estimates and reduces the distortion.  
Furthermore, the figures show that for both real-time error distortion and the distortion in~\eqref{eq_dis_ex}, the age-based policies perform almost identically, while there is a notable gap between these baseline policies and the proposed optimal policy.  
\begin{equation}\label{eq_dis_ex}
 d_1 =
      \begin{bmatrix}
   0 & 3   \\ 1& 0& 
 \end{bmatrix},~~~~~~~~
\\
 d_2 =
      \begin{bmatrix}
0 & 1   \\ 5 & 0& 
 \end{bmatrix}.
 \end{equation}
\begin{figure}[t!]
\subfigure[The real-time error] 
{
\includegraphics[width=0.4\textwidth]{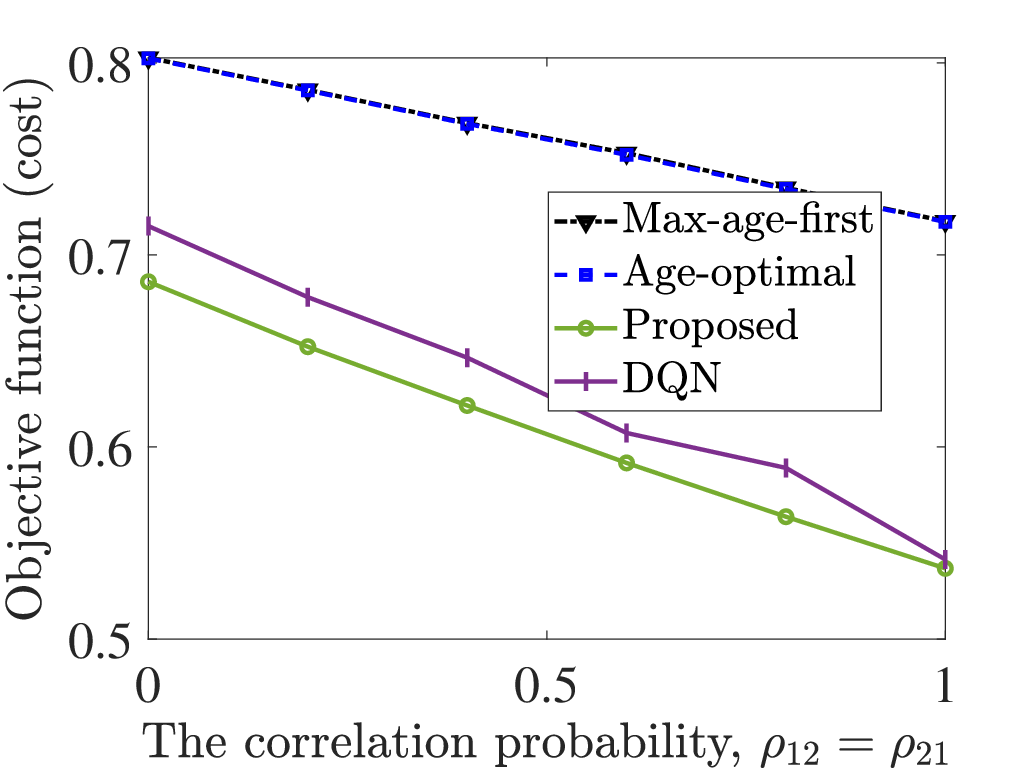} 
\label{fig_crl_rte}
}
\hspace{-0.5cm}\subfigure[The distortion given by \eqref{eq_dis_ex}]{
\includegraphics[width=0.4\textwidth]{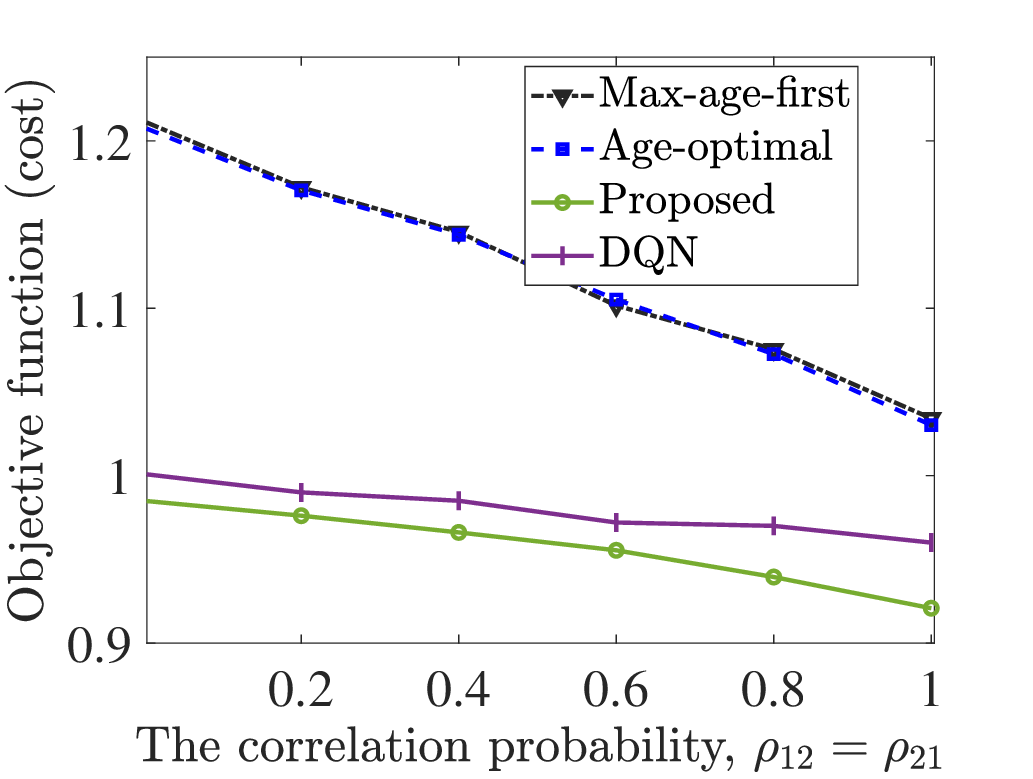}
\label{fig_crl_gnrdis}
}
\vspace{-2 mm}
\caption{The average cost vs. the correlation probabilities for different distortions for~$p_1=0.9$,~$p_2=0.9$,~$q_1=0.9$,~$q_2=0.9$,~and~$\alpha = 0.5$.
}
\label{Fig_correlation}
\vspace{-6 mm}
\end{figure}
\indent
Fig.~\ref{fig_selftran} illustrates the average cost as a function of the self-transition probabilities of the sources, $p_1$ and $p_2$, for different policies. The figure highlights a symmetric behavior of the cost function, with the maximum cost occurring at $p_1 = p_2= 0.5$, where the sources have the maximum entropy, making them harder to track accurately.  
This symmetric behavior is expected due to the trackability of sources that are either slow-varying, i.e., high $p_1$ and $p_2$, or fast-varying, i.e., low $p_1$ and $p_2$, \textit{provided that} an appropriate optimal estimation strategy is used. However, if the estimation strategy does not adapt to the source dynamics, such as using a fixed last-sample estimate,
the performance degrades significantly, particularly for the fast-varying sources.  
The performance gap between the proposed policy and the age-based baseline policies is most pronounced at $p_1= 0.5$ and $ p_2 = 0.5$, as the baseline policies are source-agnostic. 
In contrast, the DQN policy coincides with the optimal policy.
Nonetheless, the symmetric behavior of the age-based policies persists because we still consider minimum-distortion estimation at the monitor; otherwise, their performance would remain unchanged.
\begin{figure}[t!]
    \centering
    \includegraphics[width= 0.4\textwidth]{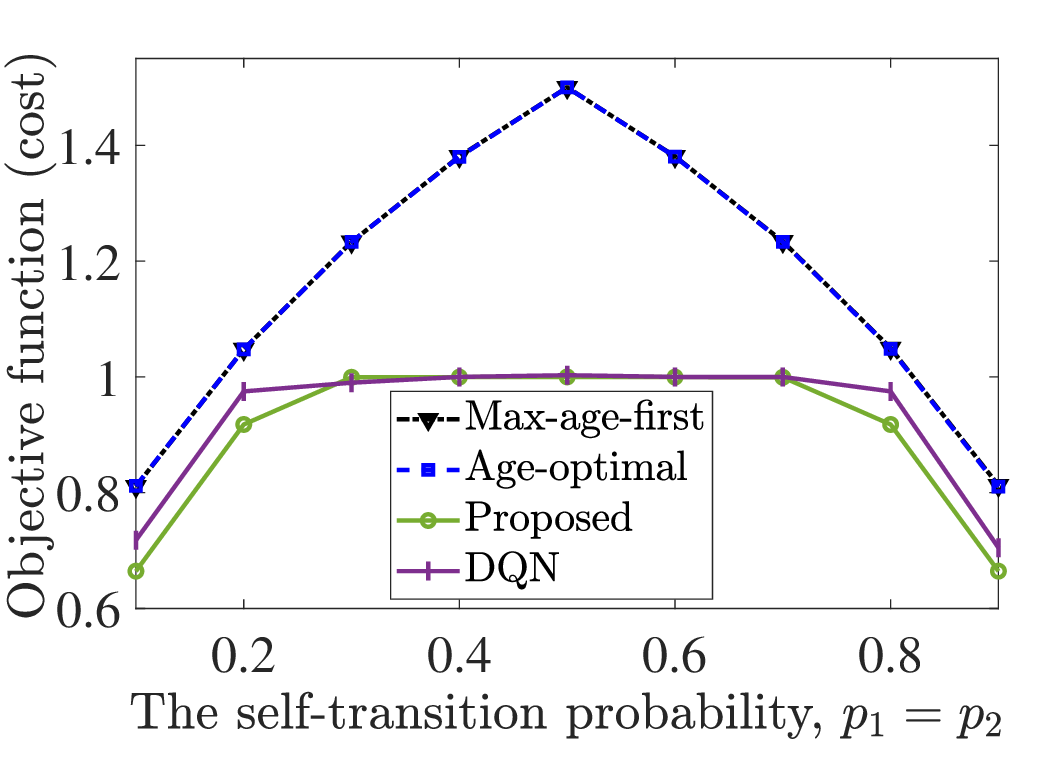}     
    \vspace{- 2 mm}
    \caption{The average cost vs. the self-transition probabilities of the sources for $q_1=0.8$,  $q_2=0.6,$ $\rho_{12}=0.4$, $\rho_{21}=0.7$, and $\alpha = 0.5$.
    }
    \label{fig_selftran}
    \vspace{-4 mm }
\end{figure}
\\\indent
Fig.~\ref{fig_chnlrlb} examines the impact of channel reliability on the performance of different policies with $q_1 = q_2$. The results show a direct relationship between cost reduction and channel reliability: higher channel reliability leads to a lower cost. This is because, by increasing the reliability, the likelihood of successful transmissions increases, enabling the monitor to stay informed about the source states and track them more accurately.  
The figure also shows that the trend with varying channel reliabilities is similar across all policies, with an increasing performance gap between the age-based policies and both the proposed and DQN policies. 
\begin{figure}[t!]
    \centering
    \includegraphics[width= 0.4\textwidth]{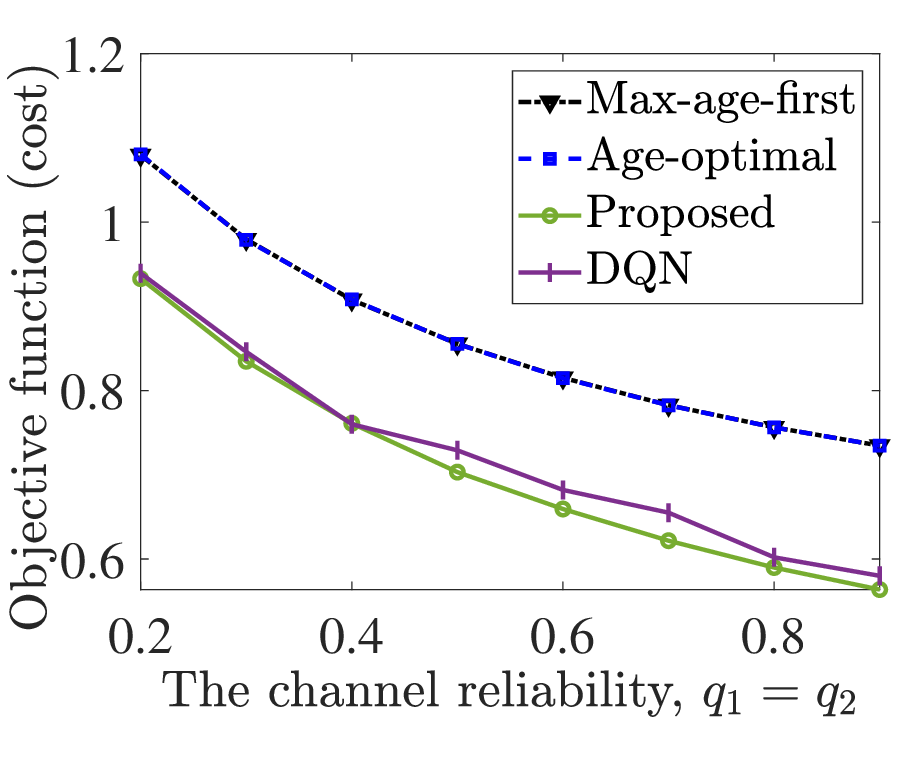}   
   \vspace{-2 mm}
    \caption{The average cost vs. the channel reliabilities for $p_1=0.9$,  $p_2=0.9$, $\rho_{12}=0.8$, $\rho_{21}=0.8$, and $\alpha = 0.5$.
    }
    \label{fig_chnlrlb}
     \vspace{- 7 mm}
\end{figure}
\section{Conclusions }\label{sec_conl}
We studied a goal-oriented real-time remote tracking problem in a status updating system with \textit{correlated observations}, comprising two independent Markov sources, two sensors, and a monitor.
We aimed to find the optimal pull policy for the monitor to minimize the weighted sum of distortion and transmission costs. Using a POMDP-based approach, we formulated the problem as a belief-MDP. Then, by expressing the belief as a function of AoI, we cast a finite-state MDP problem and solved it via RVI.
\\\indent
The simulation results showed that the derived policy outperforms the baseline policies, max-age-first and age-optimal, in cost reduction, while the baseline policies showed similar performance in most cases. Results also indicated that the correlation reduces distortion, which is symmetric around the self-transition probability being $0.5$. While we addressed correlation in the observations, exploring the correlation in the sources' dynamics remains an open problem for future work.
\vspace{-1 em}

\appendix 
\subsection{\blue{ Proof of Proposition \ref{prop_blf_evl} }}\label{App_blfUpdate}
We first begin with the belief definition at $t+1$ as:
\begin{align}
\nonumber
& b_i(t+1) = 
\\ &\Pr\left\{ X_i(t+1) =1 \mid I(t) = I, o(t+1) = o, a(t) = a \right\}
\\\label{eq_blf_prob} &
= \frac{ \Pr\left\{ X_i(t+1)=1, o(t+1)=o \mid I(t)=I, a(t)=a \right\} }{ \Pr\left\{ o(t+1)=o \mid I(t)=I, a(t)=a \right\} }
\end{align}
which follows directly from the definition of conditional probability. 
Notice that for notational simplicity, we omit explicit mention of the fixed values $I, o, a$, and write $\Pr\{X, o \mid I, a\}$ instead of $\Pr\{X_i(t+1)=1, o(t+1)=o \mid I(t)=I, a(t)=a\}$.
Next we calculate the numerator of \eqref{eq_blf_prob} as follows:
    \begin{align}
& \Pr\left\{ X_i(t+1)=1,\, o \mid I,\, a \right\} 
\\ \nonumber
&= \sum_{X_i\in\{0,1\}} \Pr\left\{ X_i(t+1) = 1 \mid X_i(t) = X_i,\, I,\, a,\, o \right\} 
\\
&\quad \times \Pr\left\{ X_i(t) = X_i \mid I,\, o,\, a \right\} \Pr\left\{ o \mid I,\, a \right\}
    \end{align}
By plugging the above equation in \eqref{eq_blf_prob} and canceling out the terms $\Pr\left\{ o(t+1) \mid I(t), a(t) \right\}$, we obtain:
\begin{align}
   & b_i(t+1) =
    \\ & \nonumber \sum_{X_i\in\{0,1\}}\Pr\left\{ X_i(t+1) = 1\mid X_i(t) = X_i, I, a, o \right\} 
    \\&
     \times \Pr\left\{ X_i(t) = X_i \mid  I, o, a \right\}
     \\ & \nonumber = \sum_{X_i\in\{0,1\}} \Pr\left\{ X_i(t+1) = 1\mid X_i(t) = X_i \right\} 
 \\& \label{eq_blf_finalsum}   \times \Pr\left\{ X_i(t) = X_i \mid  I, o, a \right\},
\end{align}
where the first term is simply the source's transition probabilities given by
\begin{align}\label{eq_blf_fist}
\begin{array}{ll}
 &  \Pr\left\{ X_i(t+1) = 1\mid X_i(t) = X_i \right\} =
 \left\{ 
  \begin{array}{ll}
p_i, & \text{If}~X_i=1 \\
1-p_i & \text{If}~X_i=0
  \end{array}\right.
  \end{array},
\end{align}
 and the second term can be computed as follows:
\begin{align}
 \begin{array}{ll}
 & \Pr\left\{ X_i(t) =X_i \mid  I(t)=I, o(t+1)=o, a(t)=a \right\} =
 \\& 
 \left\{ 
  \begin{array}{ll}\label{eq_blf_second}
       1,  & \text{if}~X_i=1;\delta_i = 1,~\bar{X}_i=1 \\
        1,  & \text{if}~X_i=0;\delta_i = 1,~\bar{X}_i=0 \\
b_i(t),  & \text{if}~X_i=1;\delta_i \neq 1 \\
1-b_i(t),  & \text{if}~X_i=0;\delta_i \neq 1 \\
0, & \text{otherwise}
         \end{array}\right.
    \end{array}.
\end{align}
Now expanding \eqref{eq_blf_finalsum} using the corresponding terms in  \eqref{eq_blf_fist} and in \eqref{eq_blf_second} yields the belief update formula in \eqref{eq_blf_evl}, which completes the proof.

\renewcommand{\baselinestretch}{0.9555}
\small
\bibliographystyle{ieeetr}
\bibliography{Bib_References/conf_short,
Bib_References/IEEEabrv,
Bib_References/Bibliography}

\end{document}